\definecolor{dkgreen}{rgb}{0,0.6,0}
\definecolor{gray}{rgb}{0.5,0.5,0.5}
\definecolor{mauve}{rgb}{0.58,0,0.82}
\newtheorem{theorem}{Theorem}
\newtheorem{lemma}[theorem]{Lemma}
\begin{document}
\title{N-break states in a chain of nonlinear oscillators}

\author{A.S.\ Rodrigues}
\affiliation{Departamento de F\'{\i}sica e Astronomia/CFP, Faculdade de Ci\^{e}ncias, Universidade do Porto, R. Campo Alegre,
	687 - 4169-007 Porto, Portugal}
\email{asrodrig@fc.up.pt}

\author{P.G.\ Kevrekidis}
\affiliation{Department of Mathematics and Statistics, University of Massachusetts,
	Amherst MA 01003-4515,  USA}
\email{kevrekid@math.umass.edu}

\author{M.\ Dobson}
\affiliation{Department of Mathematics and Statistics, University of Massachusetts,
	Amherst MA 01003-4515, USA}
\email{dobson@math.umass.edu}

\begin{abstract}
  In the present work we explore a pre-stretched oscillator chain where
  the nodes interact via a pairwise Lennard-Jones potential. In addition
  to a homogeneous solution, we identify solutions with one or more (so-called)
  ``breaks'', i.e., jumps. As a function of the canonical parameter of
  the system, namely the precompression strain $d$, we find that
  the most fundamental one break solution changes stability when the
  monotonicity of the Hamiltonian changes with $d$. We provide a proof
  for this (motivated by numerical computations) observation.
  This critical point separates stable and unstable segments
  of the one break branch of solutions. We find similar branches
  for 2 through 5 break branches of solutions. Each of these
  higher ``excited state'' solutions possesses an additional unstable
  pair of eigenvalues. We thus conjecture that $k$ break solutions
  will possess at least $k-1$ (and at most $k$) pairs of unstable
  eigenvalues. Our stability analysis is corroborated by direct
  numerical computations of the evolutionary dynamics.
\end{abstract}
\date{\today}

\maketitle

\section{Introduction}
\label{sec:intro}

The study of chains with pair-wise interaction potentials
has had a long and distinguished history since the
inception of the Fermi-Pasta-Ulam (FPU) model~\cite{FPU0};
see for some relevant accounts the works of~\cite{FPU1,FPU2}.
Intriguingly, some of the original questions revolving around
the foundations of associated studies remain active topics of investigation
even half a century later. Among them, we note the potential
equipartition of the energy among different degrees of freedom~\cite{flach},
or the number of solitary waves emerging in the
early Kruskal-Zabusky
simulations~\cite{KZ}; for the latter, see the associated
recent work of~\cite{biondini}.

In the present work, we intend to examine a variant of
such inter-site interaction potential chains, in the context
of a  Lennard-Jones (LJ) potential~\cite{LJ}.
We focus, in particular, on the equilibrium states of a pre-stretched, one-dimensional 
LJ chain and provide a detailed
bifurcation analysis of the elastic (i.e., homogeneous) and broken states,
where one or more bonds deviate towards large strains, rendering
the chain inhomogeneous.
We will use the terms broken or fractured to denote the latter bonds.
The Lennard-Jones potential is prototypical 
of non-convex pair interactions, with a convex region for close particles
and a concave region for longer-range interactions, with the force decaying to 
zero as the interparticle distance goes to infinity.  The non-convexity allows
the potential to model fractured states of the material, where two portions
of the chain are sufficiently separated and have very weak interactions,
as is done in $\Gamma$-convergence approaches to the continuum limit 
of such 1D chains~\cite{braid, scha}.

Among the numerous and diverse topics considered for such
LJ lattices, for example the dynamics and mean length
of clusters
at finite temperature~\cite{leedadswell},
the (homoclinic to exponentially small periodic oscillations)
subsonic, as well as (periodic) supersonic lattice traveling
waves~\cite{venney}, the potential for
chaotic motion through the maximum Lyapunov exponent~\cite{okabe},
and as a model for superheated and stretched liquids~\cite{stillinger}
(whereby the role of the
different dynamical configurations must be assessed in the
calculation of thermodynamic quantities).
A linear approximation of the chain and its solutions for nearest-neighbor
(NN) 
and next-near neighbor (NNN) interactions was explored in~\cite{char}.

The existence and stability of one break solutions was studied for the Morse ~\cite{crist84}
and LJ~\cite{stillinger} potentials, while the instability of more than 1 break solutions
was argued. This can be seen intuitively by considering the translation of a
non-boundary segment in a direction that closes one of the fractures.
In both cases the arguments used were based on the relative character
of the energy minimum. In the latter study statistical mechanics arguments were used.
Using the static solutions as initial states, these studies were extended in molecular dynamics methods
to study the expected time for a failure to occur at finite temperatures
(see, e.g.,~\cite{oliveira98} and references therein), 
and collective fluctuations~\cite{lepri05}.
Later, it was shown that for a wide range of potentials and many-neighbor
interactions, the chain
with more than one fracture is always
locally unstable~\cite{ortner} (see also references therein).

Our aim here is somewhat different, as we explore the bifurcation analysis
of different states and provide a systematic count of the eigenvalues
of the different branches of solutions. We also consider the eigendirections
of the relevant instabilities and excite them in order to observe
the dynamical response of the chain to different unstable perturbations
(when appropriate). This helps us shape a systematic picture about the
existence, stability and dynamical properties of the chain.
It adds to the picture provided by molecular dynamics simulations 
by showing  more direct paths to create broken chains,
using eigendirections of the linear excitations.

Our presentation hereafter will be structured as follows. In Section~\ref{sec:setup},
we will present the mathematical formulation and some of the principal
features of the model. In Section~\ref{sec:analysis}, we will prove a basic
result for the stability of the static solutions in connection
to the monotonicity of the Hamiltonian as a function of
the driving precompression parameter. In Section~\ref{sec:numerics_nn}, we will
present numerical computations of existence, stability and dynamics.
Finally, in Section~\ref{sec:conclusion}, we will summarize our findings and present
our conclusions, as well as some directions for future work.

\section{Mathematical Setup: Nearest-neighbor interaction between pre-stretched oscillators chain}
\label{sec:setup}
We consider the following Hamiltonian system describing 
$M$ free oscillators interacting via a potential $\phi(r),$ with the two ends clamped.  
Let $u_n$ for $n = 0, \dots M+1$ denote the displacements of the oscillators, with $u_0=0$ and $u_{M+1}=0.$ 
We also assume that the chain has been pre-stretched to a separation value $d.$ (Bold characters denote
vectors whose components are as in $\mathbf{u}=[u_0, u_1,\dots, u_M,u_{M+1}].$)  The Hamiltonian is 
written as the sum of kinetic and potential energy, giving
\begin{align*}
\mathcal{H}_0(\dot{\mathbf{u}},\mathbf{u}) &= K(\dot{\mathbf{u}}) + V(\mathbf{u})  \qquad \text{where} \\
K(\dot{\mathbf{u}}) &= \sum_{n=1}^{M} \, \frac{1}{2} \dot{u}_n^2,  \\
V(\mathbf{u}) &= \sum_{n=1}^{M+1} \, \left[ \phi(d+u_n-u_{n-1}) - \phi(d)\right]. 
\label{eq:hamiltNN} 
\end{align*}

From this we obtain the equations of motion 
\begin{eqnarray}\label{eq:motion1}
  \ddot{u}_n
           = \phi^{\prime}(d+u_{n+1}-u_n) -\phi^{\prime}(d+u_n-u_{n-1}) \qquad n = 1, \dots, M.
\end{eqnarray}

If we consider the interaction potential to be of the LJ type, scaled to have the dimensionless form:
\begin{eqnarray}
  \phi(r)=\frac{1}{r^{12}} - \frac{2}{r^{6}},
  \label{lj}
\end{eqnarray}
the reference length, where force $f$ is zero, is at $r_0$ such that
\begin{eqnarray}
  \frac{\partial \phi}{\partial r} =f(r)=0
\Rightarrow r_0=1 .
\label{ljeq}
\end{eqnarray}

Similarly, the inflection point is obtained from:
\begin{eqnarray}
  \frac{\partial^2 \phi}{\partial r^2} =0
  \Rightarrow r_i=\left(\frac{13}{7}\right)^{1/6}\approx1.10868.
  \label{infl}
\end{eqnarray}

In our considerations within what follows, we will examine the
possible solutions of the corresponding static problem as parametrized
by $d$. Once a static solution $\mathbf{u}_{0}$ is identified we perturb them
by means of the ansatz:
\begin{eqnarray}
  u_n=u_{0,n} + \epsilon e^{\lambda t} \delta_n.
  \label{pert}
\end{eqnarray}
Substituting in the equation of motion, written as
\begin{equation}
\ddot{u}_n = F_n(\mathbf{u}),\label{eq:mot}
\end{equation}
we obtain
\begin{equation}
\frac{d^2}{dt^2}\left(\mathbf{u_0} + \epsilon e^{\lambda t} \boldsymbol{\delta} \right) =
F(\mathbf{u_0} + \epsilon e^{\lambda t} \boldsymbol{\delta}),
\end{equation}
or
\begin{equation}
\ddot{\mathbf{u}}_0 +\epsilon\lambda^2 e^{\lambda t} \boldsymbol{\delta} = \mathbf{F}(\mathbf{u_0}) + \epsilon e^{\lambda t}
\left. \frac{\partial \mathbf{F}}{\partial \mathbf{u}}\right\|_{u_0} \boldsymbol{\delta} + \mathcal{O}(\epsilon^2)
\end{equation}

At $O(1)$ we obtain 
the steady state equation, and at $O(\epsilon)$ we have:
\begin{equation}
\lambda^2  \boldsymbol{\delta} = \left.  \frac{\partial \mathbf{F}}{\partial \mathbf{u}}\right\|_{\mathbf{u_0}} \boldsymbol{\delta} =
J(\mathbf{u_0}) \boldsymbol{\delta} 
\label{eq:lsa}
\end{equation}
where $J$ is the Jacobian matrix. This is an 
eigenvalue problem arising
for the eigenvalue-eigenvector pair $(\lambda,\boldsymbol{\delta})$.
The relevant i-th pair $(i=1, \dots, M)$
will also be denoted by $(\lambda_i,\mathbf{e}_i)$
in what follows.
Its
result
will allow us to assess the spectral (linear) stability
of the different solutions, as a non-vanishing real part
of the eigenvalue $\lambda$ (positive $\lambda^2$) will be associated with dynamical
instability (the perturbation will grow), while for marginally stable solutions (the perturbation will just oscillate) all $\lambda$'s will
lie on the imaginary axis (negative $\lambda^2$).

\section{Bifurcation analysis of the Lennard-Jones chain: A Criterion}
\label{sec:analysis}

  Before we embark on a systematic numerical computation of the stationary
  solutions and their spectral properties, we establish a theoretical
  criterion for stability motivated by our numerical computations that
  will follow.
Due to the nearest-neighbor interactions between the particles, the equilibrium
states are particularly simple, as the balance of force on each particle gives
\begin{eqnarray}
  \phi'(d + u_{n+1} - u_n)  = \phi'(d + u_n - u_{n-1}).
  \label{equil}
  \end{eqnarray}
We define the bond length (or strain) variables $r_n = d + u_{n+1} - u_n,$ where
we have the equilibrium condition 
$$
\phi'(r_{n})  = \phi'(r_{n-1}), \qquad n = 1, \dots, M.$$
The Dirichlet boundary conditions $u_0 = u_{M+1} = 0$
correspond to the total strain condition 
\begin{eqnarray}
\sum_{n = 0}^M r_n = (M+1)d.
\label{equil2}
\end{eqnarray}

 We let $f_{\rm max} = \max_r \phi'(r) = \phi'(r_i).$
 For $0 < f < f_{\rm max},$ there
are two solutions to $\phi'(r) = f,$ one with 
$1< r < r_i$, the other with
$r > r_i.$  We describe a bond with length $r < r_i$ as elastic and with 
length $r > r_i$ as broken, and define the two right inverses
$r_e : (0, f_{\rm max}] \rightarrow (1, r_i]$ and $r_b : (0, f_{\rm max}] \rightarrow [r_i, \infty)$
where $\phi'(r_e(f)) = \phi'(r_b(f)) = f$ for every $f \in (0, f_{\rm max}].$
In the following, we will consider equilibria containing one or more 
breaks.

\begin{lemma}
There is a minimal $d$ for which one break solutions exist, corresponding to $M+1$ isolated saddle points 
at which the stability of the equilibria changes.
\end{lemma}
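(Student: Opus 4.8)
\medskip
\noindent\emph{Proof strategy.} The plan is to reduce the existence of one-break equilibria to the study of a single scalar function $g(f)$ of the common bond force $f$, to locate its interior minimum, and to identify the corresponding value $d^{\star}$ both with the minimal precompression at which one-break solutions exist and with a fold (saddle-node) of the equilibrium branch at which exactly one eigenvalue of the Jacobian passes through zero.

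First I would parametrize the branch. By the equilibrium condition \eqref{equil} every bond carries the same force $f$, so a one-break configuration consists of $M$ elastic bonds of length $r_e(f)$ and one broken bond of length $r_b(f)$, inserted in one of the $M+1$ positions $n_0\in\{0,\dots,M\}$; a genuine break needs $f\in(0,f_{\rm max})$ so that $r_e(f)\neq r_b(f)$. Substituting these lengths into the total-strain constraint \eqref{equil2} shows that such a solution exists precisely when
\[
d\;=\;g(f)\;:=\;\frac{M\,r_e(f)+r_b(f)}{M+1},
\]
a relation independent of $n_0$. Hence one-break solutions exist for a given $d$ iff $d$ lies in the range of $g$ over $(0,f_{\rm max})$, and for each admissible $d$ the break may be placed in any of the $M+1$ bonds. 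Since $\phi''$ vanishes only at $r_i$, the right-inverses $r_e,r_b$ (hence $g$) are smooth on $(0,f_{\rm max})$; moreover $g(f)\to r_i$ as $f\to f_{\rm max}^{-}$ (both lengths tend to $r_i$) and $g(f)\to\infty$ as $f\to0^{+}$ (since $\phi'(r)\to0$ as $r\to\infty$ forces $r_b(f)\to\infty$ while $r_e(f)\to1$).

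Next I would show $g$ attains an interior minimum. Differentiating $\phi'(r_e(f))=\phi'(r_b(f))=f$ gives $r_e'(f)=1/\phi''(r_e(f))$ and $r_b'(f)=1/\phi''(r_b(f))$, so
\[
g'(f)\;=\;\frac{1}{M+1}\left(\frac{M}{\phi''(r_e(f))}+\frac{1}{\phi''(r_b(f))}\right).
\]
Expanding $\phi'$ about $r_i$, where $\phi''(r_i)=0$ and $\phi'''(r_i)<0$, yields $r_e(f)=r_i-s+\cdots$ and $r_b(f)=r_i+s+\cdots$ with $s=\sqrt{2(f_{\rm max}-f)/|\phi'''(r_i)|}$, whence $g(f)=r_i-\tfrac{M-1}{M+1}s+\cdots<r_i$ for $f$ just below $f_{\rm max}$ when $M\ge2$ (the case $M=1$ is borderline and needs the next-order term). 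Combined with $g\to\infty$ as $f\to0^{+}$, a routine compactness argument forces $g$ to attain a global minimum $d^{\star}<r_i$ at some interior $f^{\star}$, where $g'(f^{\star})=0$. Since the range of $g$ is an interval with attained infimum $d^{\star}$ and supremum $+\infty$, one-break solutions exist exactly for $d\ge d^{\star}$; and at $d=d^{\star}$ the force is pinned to $f^{\star}$, giving exactly $M+1$ one-break equilibria, one per break location.

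Finally I would relate $g'(f^{\star})=0$ to the stability change. The Jacobian at any equilibrium is $J=-L$, with $L$ the Dirichlet weighted graph Laplacian of the path carrying edge weights $k_n=\phi''(r_n)$, $n=0,\dots,M$; a short induction on $M$ gives $\det L=\big(\sum_{n=0}^{M}k_n^{-1}\big)\prod_{n=0}^{M}k_n$. For a one-break solution all $k_n\neq0$ (as $f^{\star}\in(0,f_{\rm max})$) and $\sum_n k_n^{-1}=M/\phi''(r_e)+1/\phi''(r_b)=(M+1)g'(f)$, so $J$ is singular exactly when $g'(f)=0$, i.e.\ exactly at $d=d^{\star}$. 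Because $d^{\star}$ is a strict minimum, this zero of $g'$---and hence the zero eigenvalue of $J$---is simple, so one eigenvalue $\lambda^{2}$ of $J$ changes sign as the one-break branch is traversed through $f^{\star}$: each of the $M+1$ equilibria at $d=d^{\star}$ is an isolated saddle-node (fold) point at which the stability of the one-break branch changes, which is the assertion. The step I expect to be the main obstacle is verifying that $g$ has a \emph{unique} interior minimum---so the count at $d=d^{\star}$ is exactly $M+1$ and not a larger multiple---together with the borderline $M=1$ case; this is where the explicit Lennard-Jones form of $\phi$, not merely its convex/concave shape, must enter (here supported by the numerics).
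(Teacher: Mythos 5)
Your proposal is correct and reaches the same pivotal quantity as the paper, $M r_e'(f)+r_b'(f)$, but by a genuinely different route. The paper's proof is more elementary and local: it parametrizes the one-break branch by the common force $f$ (your $g(f)$ is exactly their $L_1(f)/(M+1)$ and their energy $\mathcal{H}_1(f)$ has the same critical point), and then, instead of examining the full Jacobian, it restricts the energy Hessian to the single explicit mean-zero direction $\delta=(M,-1,\dots,-1)$, computing the quadratic coefficient $M^2\phi''(r_b)+M\phi''(r_e)=M\phi''(r_b)\phi''(r_e)\left[Mr_e'(f)+r_b'(f)\right]$, whose sign change coincides with the fold; this has the side benefit of producing the destabilizing eigendirection that the paper then tracks numerically. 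You instead work with the full linearization, identifying $J=-L$ with the Dirichlet weighted path Laplacian and using $\det L=\bigl(\sum_n k_n^{-1}\bigr)\prod_n k_n$ to show $J$ is singular precisely when $g'(f)=0$; this is stronger in that it shows no other eigenvalue can vanish along the branch and pins the stability change to the fold alone, and your asymptotics of $g$ near $f=0$ and $f=f_{\rm max}$ (showing $g<r_i$ there for $M\ge2$) actually prove existence of the interior minimum, which the paper merely asserts. Two small caveats in your write-up: the simplicity of the zero eigenvalue of $J$ does not follow from $f^{\star}$ being a strict minimum of $g$ (those are different notions of simplicity); it follows instead from $L$ being an irreducible tridiagonal (Jacobi) matrix with nonzero off-diagonal entries $-\phi''(r_e)$, $-\phi''(r_b)$, or, more crudely, from the sign change of $\det L$ forcing an odd number of crossings. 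And the uniqueness of the interior minimum (needed for the count of exactly $M+1$ isolated fold points) is indeed not addressed by the paper either, which relies on the numerics for that, so flagging it as the remaining gap is fair.
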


\begin{proof}
From the elastic and broken bond solutions, we can parameterize the 
equilibrium states using the bond stress $f.$  
Then, a uniformly stretched chain has
total length $L_0(f) = (M+1) r_e(f)$ whereas a chain with a single break
has total length $L_1(f) = M r_e(f) + r_b(f).$  Note that
$L_0(f)$ is a monotone increasing function of $f$, whereas $L_1$ is not,
it has a local minimum when $L_1'(f) = M r_e'(f) + r_b'(f) = 0.$  

The total energy for a chain with a single break is
$\mathcal{H}_1(f) = M \phi(r_e(f)) + \phi(r_b(f)),$ and
we see   
\begin{eqnarray}
\mathcal{H}_1'(f) = M \phi'(r_e(f)) r_e'(f) + \phi'(r_b(f)) r_b'(f) =
M f \, r_e'(f) + f \, r_b'(f),
\label{equil3}
\end{eqnarray}
so that its local minimum corresponds to that of $L_1.$  We can also show directly that this point
represents a change in stability for the single fracture solution.

For that, consider a single-fracture equilibrium with strain $\mathbf{r},$ where we take without
loss of generality $r_0 = r_b$ and $r_n = r_e$ for $n = 1, \dots, M.$  We will denote the 
internal stress $f = \phi'(r_e) = \phi'(r_b).$  We then consider
the mean-zero perturbation direction 
$$\delta_n = \begin{cases}
M & n = 0, \\
-1 & n = 1, \dots, M.
\end{cases}
$$
When applying a perturbation $\epsilon \delta_n,$ for positive $\epsilon,$ this enlarges the break, proportionally
shrinking the rest of the chain, and inversely for negative $\epsilon.$  For large enough $d,$ there are two single-fracture equilibria possible, one stable and one unstable, with the unstable branch moving toward the no break solution
for negative epsilon and toward the stable single-fracture equilibrium for positive epsilon, see Figure~\ref{fig:cont1} below.
Then we expand the energy
\begin{align*}
H(\mathbf{r} + \epsilon \mathbf{\delta}) &= 
\phi(r_b + M \epsilon) + M \phi(r_e - \epsilon) \\
&= \phi(r_b) + M \phi(r_e) + 
[M \phi'(r_b) - M \phi'(r_e) ] \epsilon
+ [M^2 \phi''(r_b) + M \phi''(r_e)] \frac{\epsilon^2}{2} + O(\epsilon^3)
\end{align*}
The linear term in $\epsilon$ is zero since $\mathbf{r}$ is an equilibrium.
The quadratic coefficient satisfies
\begin{align*}
[M^2 \phi''(r_b) + M \phi''(r_e)] &= M \phi''(r_b) \phi''(r_e) 
\left[ \frac{M}{\phi''(r_e)} + \frac{1}{\phi''(r_b)} \right] \\
&= M \phi''(r_b) \phi''(r_e) 
\left[ M r'_e(f)  + r'_b(f) \right].
\end{align*}
where the last equality follows from differentiating $\phi'(r_e(f)) = \phi'(r_b(f)) = f.$
This has a zero (that is, a change of concavity) exactly when $L'_1(f)$ does as well.
\end{proof}

This calculation suggests that motion along this eigendirection becomes
neutral as we cross the relevant critical point of the length or energy
curve as a function of the precompression parameter $d$. As a result,
crossing this point will induce a change of stability along the corresponding
eigendirection, a feature that we will monitor further in our detailed
computations below.
It is relevant to point out here that the stability criterion
developed herein
is in line with recent criteria (based on energy monotonicity
changes upon suitable parametric variations) for stability of both traveling
waves in lattices~\cite{hai1,hai2} and breather-like periodic
orbits~\cite{dep}.

Note that the criterion proved above is applicable to any potential that has a change of
concavity and a maximum for the absolute value of the force.

\section{Numerical results for the Nearest-Neighbor Lennard-Jones Potential}
\label{sec:numerics_nn}

\subsection{Steady state}

In our existence computations, we identified stationary solutions via
a fixed point (Newton) iteration scheme. Using Eq.~(\ref{eq:lsa}), we also calculate the
eigenmodes 
$\mathbf{e}_i$
and corresponding eigenvalues $\lambda_i$
of that configuration (where the index in both $\lambda$ and $\mathbf{e}$ labels an ordering,
which we choose to be of decreasing magnitude of $\lambda^2$).
Upon identifying a member of
a particular family of solutions (with one or more breaks/fractures),
we performed a continuation in the parameter $d$. When a turning point was reached, 
the direction of variation of $d$ was reversed, and care was taken to ensure the segment of the
curve followed was a different one (see Figs.~(\ref{fig:cont1}--\ref{fig:cont3a})).
A more detailed description of the numerical procedure can be found, e.g., in Ref.~\cite{numer}.

In what follows we will be showing results obtained for a chain with 20 free nodes.

\subsubsection{1 break}

\begin{figure}[H]
\begin{center}
\includegraphics[width=12cm]{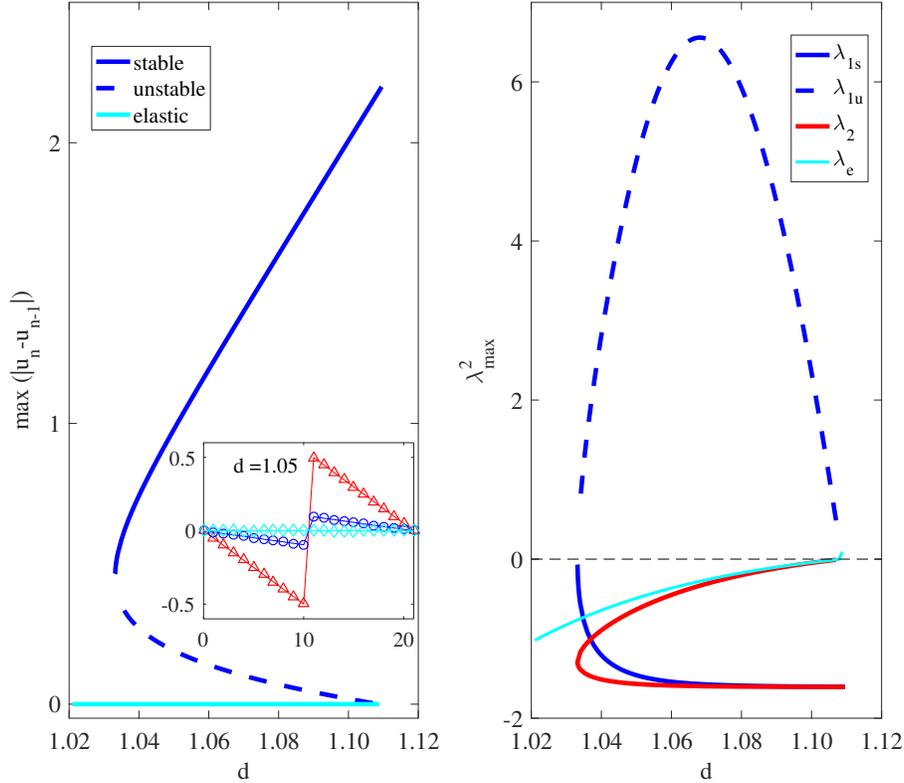}
\caption{The left panel shows stable (blue, solid) and unstable (blue, dashed)
modes with 1 break as a function of the pre-stretching parameter $d$. Also
shown (cyan) is the uniform stretch mode (no breaks). The inset represents
examples of the profiles with an elastic (cyan), an unstable break (blue), and a 
stable break (red).  As $d$ grows, the unstable mode merges with the
uniform stretch mode. The right panel shows the largest two eigenvalues of the
1 break (again, solid for the stable part, dashed for the unstable), together
with the largest eigenvalue of the elastic (uniform stretch) mode. The numbers
in subscripts indicate the order of the eigenvalue, while the subscript
letters indicate
stability with s standing for stable or and u for unstable.
}
\label{fig:cont1}
\end{center}
\end{figure}

In Fig.~\ref{fig:cont1}, left panel, we have represented the amplitude of the broken bond as
a function of the continuation parameter $d$. Two modes were found, a stable (blue, solid), and an unstable
one (blue, dashed). The elastic  (no breaks) mode is also shown (cyan,solid). The inset shows the corresponding profiles for
select values of $d$. These broken
states only exist above $d=1.034$, the turning point of
the branch.
The unstable one break branch can also be identified in the graph
as bifurcating from the uniform elastic solution
at the critical strain $d = r_i.$

In the right panel we show the highest eigenvalue for each $d$, and also the second highest if the  mode is unstable. As per the analysis
of the previous section, 
the monotonicity change of the maximal strain is correlated
with the stability change
of the one break solutions. We have indeed confirmed that the maximal strain,
as well as the total length of the chain but also, importantly, the
energy of the configuration all have turning points at the location
of the change of stability of the branch. In particular, the monotonically
increasing portion of the branch is associated with stability, while
the monotonically decreasing one with instability.
Let us now see how the situation is modified
in the presence of an additional break.

\subsubsection{2 breaks}

\begin{figure}[H]
\begin{center}
\includegraphics[width=12cm]{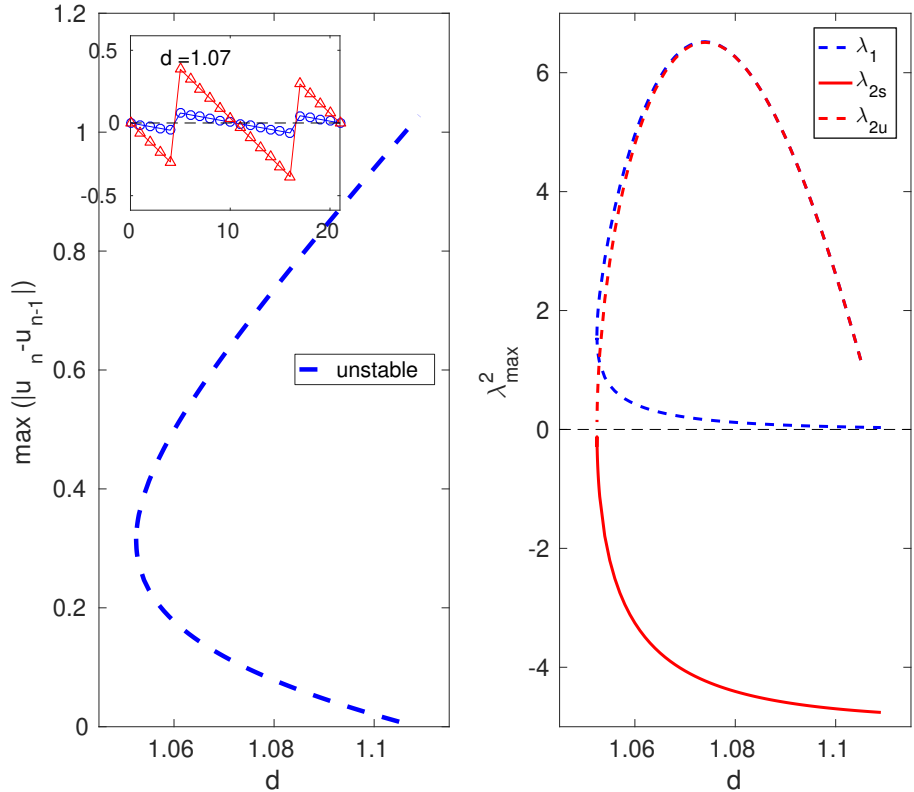}
\caption{The left panel shows the amplitude of the 2 break solutions
  as a function of pre-stretching, $d$. The inset represents
  an example of the two profiles for a given $d$. The right panel shows the two largest eigenvalues
  of this branch (both for
  its energy/strain increasing and decreasing segments),
  which is always unstable. Note that it only exists for a higher
  pre-stretching than the 1 break modes. Subscripts in legend as in Fig.~\ref{fig:cont1}}
\label{fig:cont2c}
\end{center}
\end{figure}

The configuration with 2 breaks was found to be always unstable;
see Fig.~\ref{fig:cont2c}.
In this case too, the branch was found to possess two segments,
one of which with two unstable eigenvalue pairs (the additional
one stemming from the monotonicity of the energy as a function of
the precompression strain $d$), while the other one with only
one unstable eigenvalue pair. 
These two branch segments once again terminated
in a saddle-center bifurcation at a critical value of $d$, higher than
that of the 1 break branch.

\subsubsection{3 breaks}
\begin{figure}[H]
\begin{center}
\includegraphics[width=12cm]{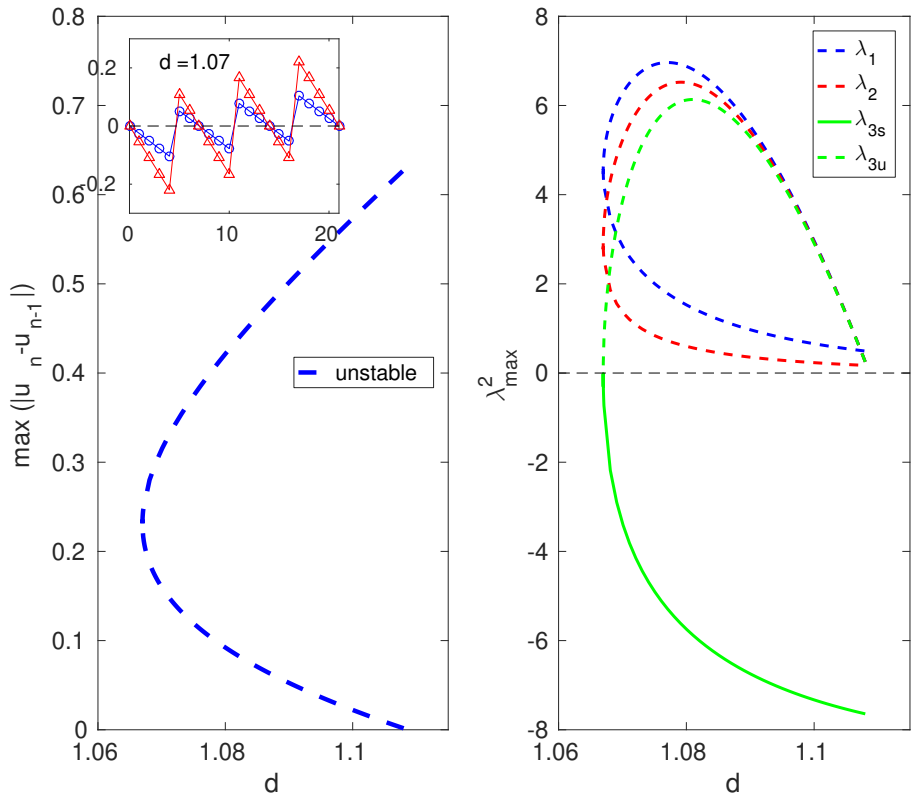}
\caption{The left panel shows the amplitude of the 3 breaks solutions
  as a function of pre-stretching, $d$. The inset represents an example of
  the solution profiles for a given $d$. The right panel shows the three
  largest eigenvalues associated with the saddle- and
  center- portions of the 3 break branch, which are both always unstable. Subscripts in legend as in Fig.~\ref{fig:cont1}}
\label{fig:cont3a}
\end{center}
\end{figure}

Similar conclusions could be drawn for the case with $3$ breaks;
see Fig.~\ref{fig:cont3a}.
Here, the different segments of the branch generically possessed two
unstable eigenvalue pairs. The one with the monotonically increasing
dependence on $d$ had only these two unstable modes, while the decreasing one,
just as before, featured an additional pair of unstable eigenvalues.
From this, as well as our additional results involving modes up
to $N=5$ breaks, a general picture is emerging regarding the stability
properties of the different branches. As illustrated in section III,
the change
of monotonicity of the energy is associated with a change of stability of 
a particular eigenmode.
For the relevant eigenmode,
an increasing energy as a function of $d$ results in stability (along
this eigendirection) while a decreasing energy leads to instability.
In addition to this eigendirection,
the presence of $N$ breaks
implies the existence of an additional $N-1$ unstable eigenmodes.
These features are summarized in Fig.~\ref{fig:combined}
showcasing the dependence of the maximal strain as well as
of the energy on the precompression strain $d$. Now, we discuss
the implications of the excitation of the corresponding unstable
eigenvectors, as a preamble towards predicting the dynamical
evolution of the associated instabilities.

\begin{figure}
	\centering
	\includegraphics[width=12cm]{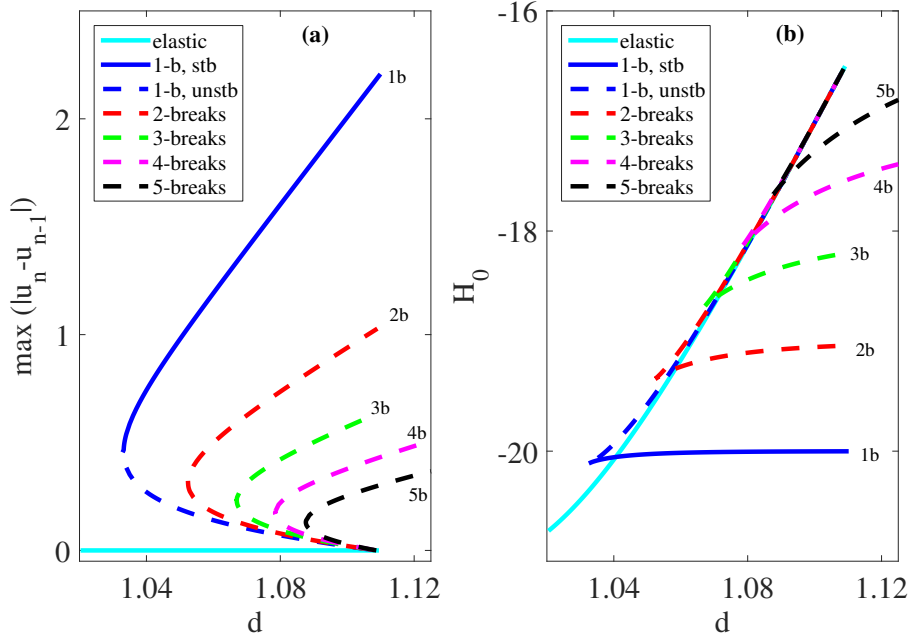}
	
	\caption{Combined results for elastic and 1-5 break branches.
		The left panel
		summarizes our results for break length, while the
		right panel shows the energy of these branches as a function
		of the potential pre-stretching parameter $d$.
	}
	\label{fig:combined}
\end{figure}

\subsubsection{Geometry of the Principal Eigenmodes}

Our aim in the present section is to explore the exact stationary solutions
when the unstable eigenmodes are appended to them, in order to appreciate
the paths that the system can take towards
the decay of the unstable stationary states.
The next series of plots show the modes found above together with the eigenvectors
that are associated with their potential instabilities, as identified before.
The right panels show a linear combination of the mode with a small perturbation in the form of each eigenvector represented on the left panels. In
Fig.~\ref{fig:u_ev1s} and the following similar figures, the weight given to the perturbation was
exaggerated for clarity. On the corresponding dynamical simulations,
small weights were applied,
consistent with the linear stability hypothesis behind the
calculation leading to those eigenvectors.
This is shown for the upper and lower segments of
branches in the top and bottom rows,
respectively.

\begin{figure}[H]
	\begin{center}
		\includegraphics[width=12cm]{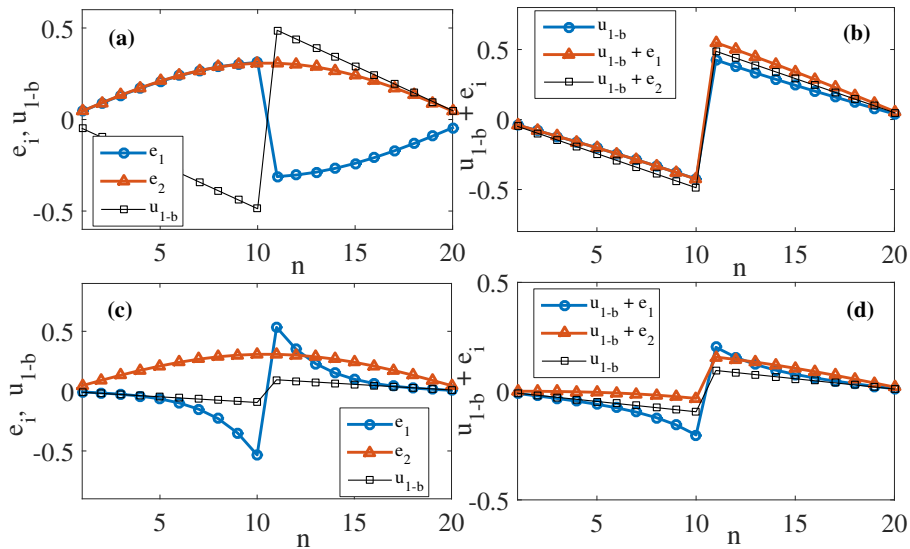}
		\caption{For the 1 break branch, the
                  upper (top row) and lower (bottom row) segment eigenvectors
                  ($\hat{\mathbf{e}}_{1,2}$)
                  are shown in the left panel (the two principal
                  ones). The
                  right panels show the
                  1 break solutions with $0.2\times\hat{\mathbf{e}}_{1,2}.$
              Here $d=1.05$, as in the inset of Fig.~\ref{fig:cont1}}. 
		\label{fig:u_ev1s}
	\end{center}
\end{figure}

In Fig.~\ref{fig:u_ev1s} we show the effect of the eigenvector corresponding to the largest eigenvalue on the shape of the modes, for upper (linearly
stable) and lower (unstable) single break branch segments.  In this case we also show the second eigenvector for illustration, but it always has $\lambda^2<0$, so its effect will be
oscillatory (i.e., the mode will be marginally stable and will not lead
to instability). At first sight the eigenvectors seem to have opposite effects, but we can always perform a phase shift of $\pi$ (since there is the freedom
of multiplying them by any real constant). The important difference lies on the sign of $\lambda^2$, which is negative for the upper branch segment, and so
its effect is to solely lead to a benign oscillation, while for the lower
branch segment it grows with time. It is this growth that leads to destruction of the mode.
The decay can lead to 2 distinct results, as will be shown below:
in the form shown, $\hat{\mathbf{e}}_1$ will make the unstable
state $u_{1l}$ (subscript $l$ for the lower segment branch and $u$ for the upper branch segment) grow towards a stable 1 break
waveform on the upper branch segment, albeit a oscillating one, given the non-dissipative, Hamiltonian
nature of the model. However, if we change
the sign of the perturbation it will decay to the
elastic mode, shedding some energy
in the form of small amplitude waves in the process.

\begin{figure}[H]
	\begin{center}
		\includegraphics[width=12cm]{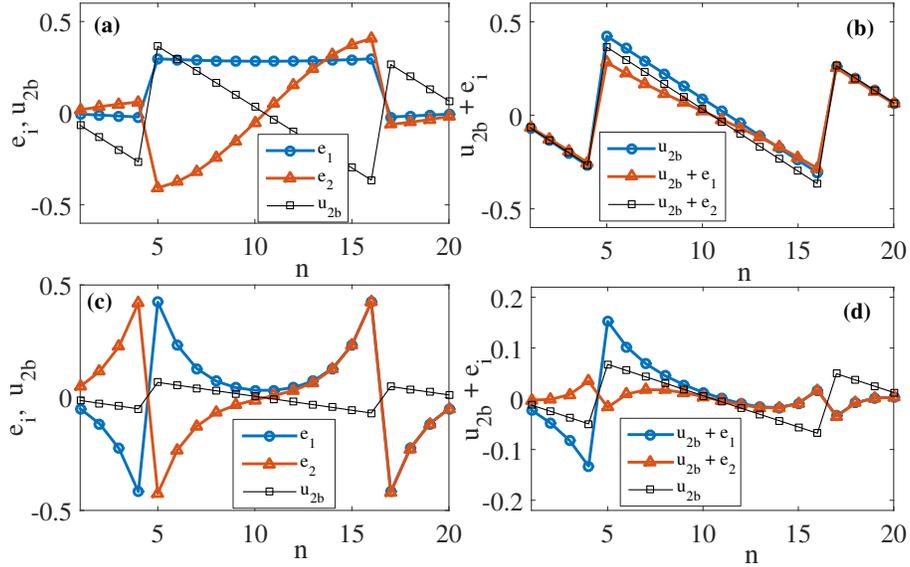}
		\caption{Similar to the previous figure, but now
                  for the 2 break branch, now for $d=1.07$, as in Fig.~\ref{fig:cont2c}. Again, in the right panel, a perturbation involving the
                  relevant eigenvectors,
                  $0.2\times\hat{\mathbf{e}}_{1,2}$, has been
                  added to the two segments (increasing/decreasing in
                top/bottom, respectively) of the branch.} 
		\label{fig:u_ev2s}
	\end{center}
\end{figure}

\begin{figure}[H]
	\begin{center}
		\includegraphics[width=14cm]{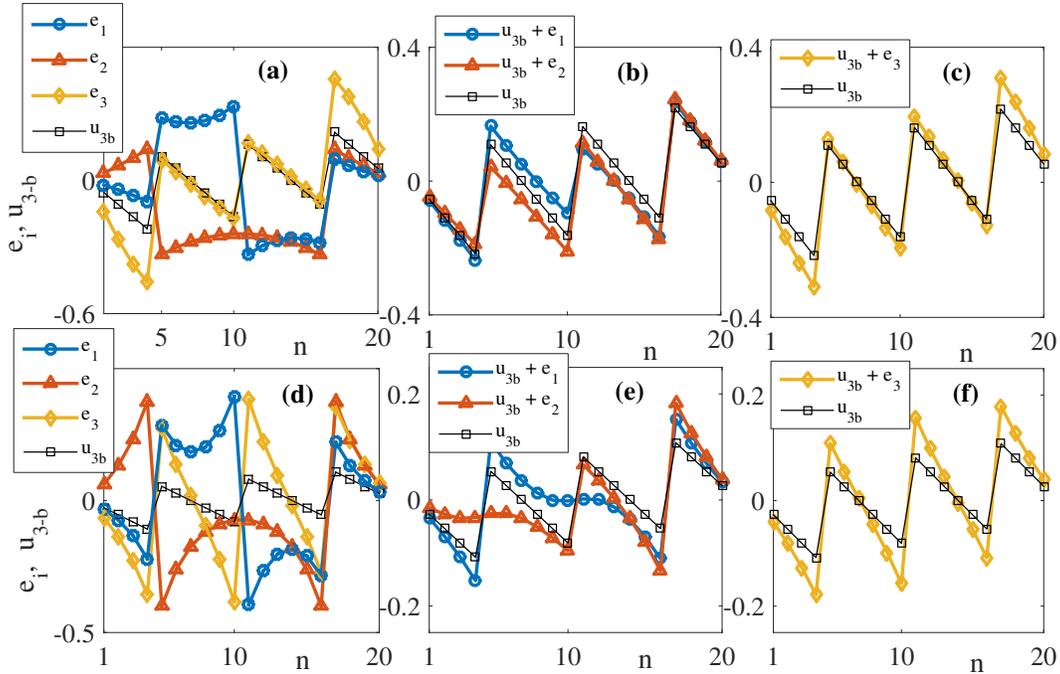}
		\caption{Similar to the previous figures, but now
		  for the 3 break branch, and $d=1.07$, corresponding to the inset of Fig.~\ref{fig:cont3a}. Here too, a perturbation
                  in the direction of the leading eigenvectors
                  of the form $0.2\times\hat{\mathbf{e}}_{1,2,3}$ was added to the upper (top row) and lower (bottom row) segments of the branch (middle and right columns). The left column shows the mode profile
                  and the first 3 eigenvectors; the middle column shows
                  the linear combination of the mode with the first or second eigenvector; the right column shows the linear combination of the
                  mode with the third eigenvector. This last one has different signs in upper or lower segments.}
		\label{fig:u_ev3s}
	\end{center}
\end{figure}

In Fig.~\ref{fig:u_ev2s}
we show a similar representation for the 2 break case.
The two leading eigenvectors alternate in parity with respect to the breaks, 
and so it is expected that  they
appear to seed different dynamical evolutions.  For example, for the
unstable lower branch, one of
these eigendirections involves the two breaks moving in concert, either moving
towards
the larger 2 break solution or the uniform state.

On the other hand,
addition of the other eigendirection (the one that is generically unstable)
tends to convert the 2 break state into a 1 break one, i.e., to eliminate
one of the two breaks. Similar interpretations
can be generalized in the case of
the 3 break solution, with the only difference that now there are two
generically unstable eigendirections, tending to reduce the number
of breaks in the system.

The analogous representation for the 3 break mode is shown in Fig.~\ref{fig:u_ev3s}.
Here, the most unstable eigenmode $\hat{\mathbf{e}}_1$ for the upper segment is either ``in-phase" (IP)
with the side breaks and ``out-of-phase" (OOP) with the central one (as represented in left upper panel
of the Fig.~\ref{fig:u_ev3s}, blue circles)
or vice-versa.
This causes the elimination of the central break,
allowing for the survival of the lateral
ones, if added (as represented in the middle top panel of the same Fig.~\ref{fig:u_ev3s}), or induces the decay the lateral ones,
and survival of the one in the middle, that grows to a stable oscillating 1 break, if subtracted.
The second most unstable eigenmode $\hat{\mathbf{e}}_2$ has a different parity (see again left upper
panel, but now the red triangles), so it is natural
to expect that whether added or subtracted will essentially
lead to a qualitatively similar result. Again from the middle top panel, we see that it will initially
reduce one of the lateral breaks, and increase the size of the other, the middle one remaining
essentially unchanged. 
As for the third eigenmode, it is stable for this upper segment, i.e., will not lead to growth or decay,
but only oscillation. From the right upper panel we see that its
effect is more pronounced on the
lateral breaks.

For the lower segment, from the lower left panel we can see that the general characteristics of the 3
eigenmodes represented do not differ from those of the upper segment. Given the smaller size of the mode
of the lower segment, however, its effects can be more pronounced. This is apparent on the middle and
right lower panels, where the central (for $\hat{\mathbf{e}}_1$), or left (for $\hat{\mathbf{e}}_2$)
breaks have essentially disappeared.
Now the third eigenvalue is  also unstable. So
the effects of the highest two eigenmodes should be qualitatively the same as for the upper segment.
The third eigenvalue however, can show changes, as now it can lead to decay of all 3 breaks (if we have
it OOP with the mode, i.e., oppositely to the situation represented). 

We now turn to the dynamical evolution of the branches,
armed with the interpretation of the different unstable states
and their associated eigendirections.

\subsection{Dynamics}

\begin{figure}[H]
	\begin{center}
		\includegraphics[width=0.95\textwidth]{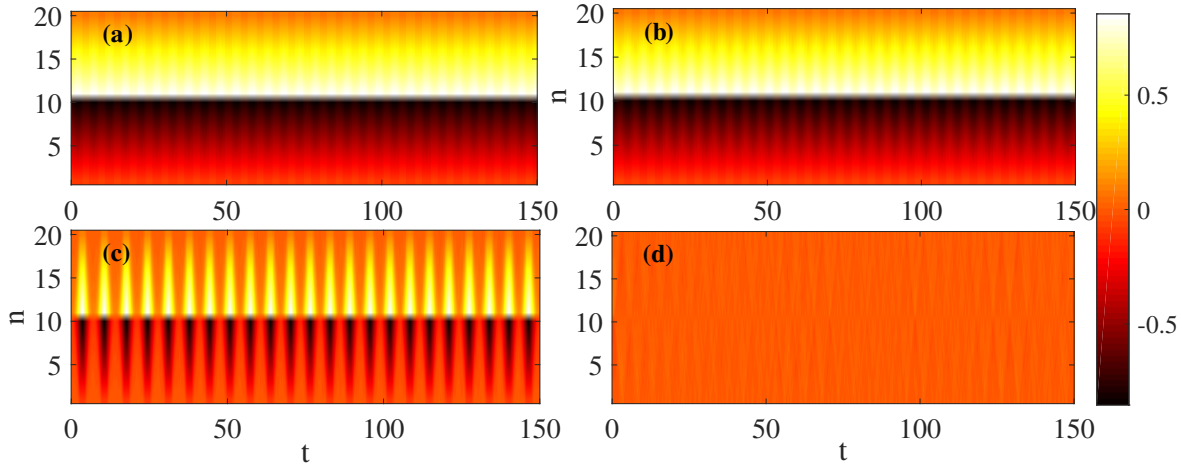}
		\caption{The dynamical evolution of
                  the 1 break branch is shown in spatio-temporal
                  ($n-t$) contour-plot form of the displacements.
                  The initial condition consists of the
                  stationary solution with a perturbation of
                  $\pm 0.01\times\hat{\mathbf{e}}_1$ added to it,
                  for $d=1.05$. (a) upper (linearly stable) segment branch mode $u_{1u} + 0.01\times\hat{\mathbf{e}}_1$; (b)  upper segment branch mode $u_{1u}- 0.01\times\hat{\mathbf{e}}_1$; (c) lower (unstable)  segment branch mode $u_{1l}+ 0.001\times\hat{\mathbf{e}}_1$; (d) lower segment branch mode $u_{1l} - 0.001\times\hat{\mathbf{e}}_1$.
                  In the latter two, the instability leads, respectively,
                  to oscillations around
                  the upper segment branch and to
                  degeneration to the homogeneous state.}
		\label{fig:dyn1s}
	\end{center}
\end{figure}

We start by illustrating the potential outcomes of the evolution
of a 1 break state.
In Fig.~\ref{fig:dyn1s} we show the evolution of such a state at the value corresponding to
the profiles shown in Fig.~\ref{fig:u_ev1s} i.e., for $d=1.05$.
On the upper row we start with an upper branch segment (stable) 1 break mode. We can see that, even with 
 a moderate perturbation (in this case a component proportional to the eigenvector of the largest
 eigenvalue), the waveform is
able to maintain its shape for the duration of the propagation, although there is some oscillation
due to the extra energy stemming from the perturbation. We
ensured that the numerical scheme conserved the initial
energy throughout the propagation duration.

On the other end, the bottom panels show the evolution starting with the
unstable 1 break solution for the same $d$. Here
the amount of perturbation introduced was much smaller (by an order of magnitude), and yet very quickly this 1 break decays.
Importantly, however, the two distinct evolutions of panels (c) and (d)
illustrate that depending on the direction of the perturbation, the
unstable 1 break (operating as a separatrix) may lead
either towards the stable 1 break
branch segment (featuring large amplitude oscillations)
or towards a homogeneous state.
These two radically different behaviors shown in panels (c) and (d)
confirm what was hinted on Fig.~\ref{fig:u_ev1s}: adding the most
unstable eigenvector takes the system to the stable 1 break solution, while
subtracting takes it to the elastic state.
It is interesting to point out that even without introducing any noise
explicitly, the numerical round-off error would eventually lead the
configuration to decay.

\begin{figure}[H]
	\begin{center}
		\includegraphics[width=0.95\textwidth]{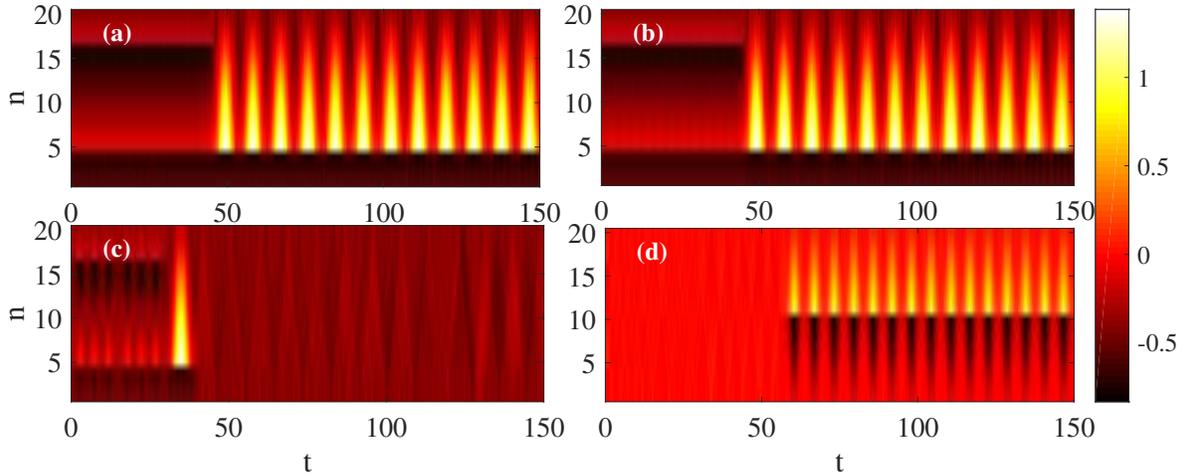}
		\caption{ Similar to Fig.~\ref{fig:dyn1s}.
              Here the initial condition consists of the
                  2 break waveforms with a perturbation added in the form
                  of the second eigenmode, $\pm 0.01\times\hat{\mathbf{e}}_2$,           
                  for $d=1.07$. (a) Upper segment branch mode $u_{2u} + 0.01\times\hat{\mathbf{e}}_2$;
                  (b)  upper segment branch mode $u_{2u}- 0.01\times\hat{\mathbf{e}}_2$; (c) lower
                  segment branch mode $u_{2l}+ 0.01\times\hat{\mathbf{e}}_2$; (d) lower segment branch
                  mode $u_{2l} - 0.01\times\hat{\mathbf{e}}_2$. Notice that although we
                  perturb the wave in the direction of the less unstable
                  eigenmode $\hat{\mathbf{e}}_2$, the more unstable
                  one ($\hat{\mathbf{e}}_1$) eventually
                  crucially contributes to the
                  destabilization dynamics of
                  both segments of the 2 break branch.}
		\label{fig:dyn2s}
	\end{center}
\end{figure}

In Fig.~\ref{fig:dyn2s} we represent  now the result of propagation of a perturbed 2 break solution,
corresponding to the profiles shown in Fig.~\ref{fig:u_ev2s}, for which $d=1.07$. The main difference now
is that the highest eigenvalue is positive for both branch segments, and so it dominates the motion.
As a result, although we perturb only with the second eigenvector
(which is only unstable for the lower segment of the branch),
even the upper branch segment
suffers decay, because of numerical noise, although it takes longer to develop. Thus adding or
subtracting the second eigenvector leads essentially to a (later) decay into a 1 break. For the lower
branch adding $\hat{\mathbf{e}}_2$ should
lead to an oscillation around an upper branch 2 break waveform, yet
the effect of contamination by a $\hat{\mathbf{e}}_1$ causes one of them to
decay. Subtracting $\hat{\mathbf{e}}_2$ should lead to an elastic mode, and that's what the simulation
shows during an initial stage. However the energy present is enough to eventually ``nucleate''
a stable 1 break.

\begin{figure}[H]
	\begin{center}
		\includegraphics[width=0.95\textwidth]{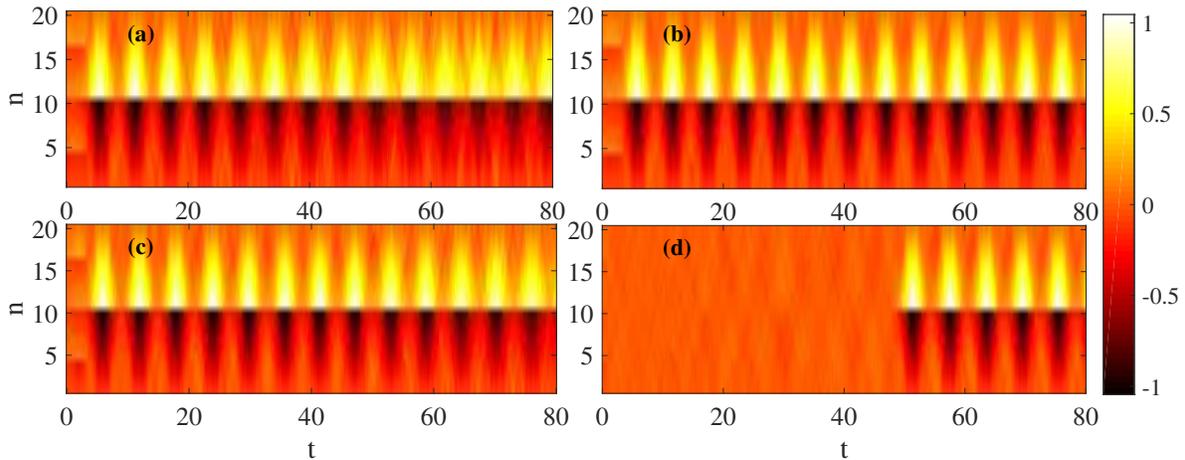}
		\caption{Similar to the previous figures,
                  but now for a 3 break branch with a
                  perturbation $\pm 0.05\times\hat{\mathbf{e}}_3$, for $d=1.07$. (a) Upper segment of the
                  branch mode $u_{3u} + 0.05\times\hat{\mathbf{e}}_3$; (b)  upper segment of the branch
                  mode $u_{3u}- 0.05\times\hat{\mathbf{e}}_3$; (c) lower segment of the branch mode
                  $u_{3l}+ 0.05\times\hat{\mathbf{e}}_3$; (d) lower segment of the branch mode
                  $u_{3l} - 0.05\times\hat{\mathbf{e}}_3$.
                  In all four cases, eventually the dynamics results in a
                1 break state.}
		\label{fig:dyn3s}
	\end{center}
\end{figure}
\begin{figure}[H]
	\begin{center}
		\includegraphics[width=0.95\textwidth]{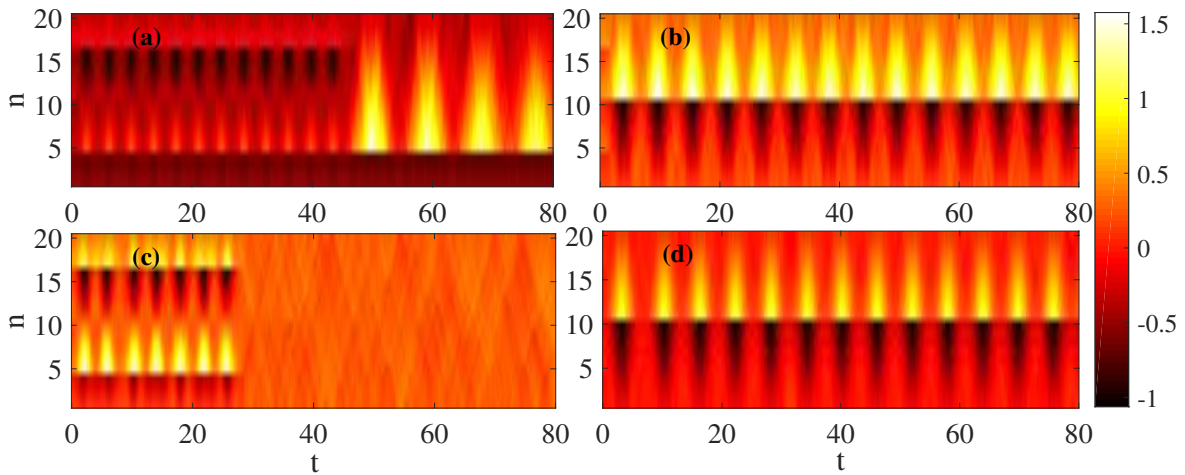}
		\caption{Same as Fig.~\ref{fig:dyn3s}, but
                  with perturbation
                  $\pm 0.05\times\hat{\mathbf{e}}_1$, for $d=1.07$. (a) Upper segment of the branch mode
                  $u_{3u} + 0.05\times\hat{\mathbf{e}}_1$; (b)  upper segment of the branch mode 
                  $u_{3u}- 0.05\times\hat{\mathbf{e}}_1$; (c) lower segment of the branch mode
                  $u_{3l}+ 0.05\times\hat{\mathbf{e}}_1$; (d) lower segment of the branch mode 
                  $u_{3l} - 0.05\times\hat{\mathbf{e}}_1$. The resulting
                  dynamics is more diverse, potentially leading to a
                  homogeneous state in (c), the survival of a central break
                  in (b) and (d), as well as the survival of one of the lateral
                breaks in (a).}
		\label{fig:dyn3s1}
	\end{center}
\end{figure}
\begin{figure}[H]
	\begin{center}
		\includegraphics[width=0.95\textwidth]{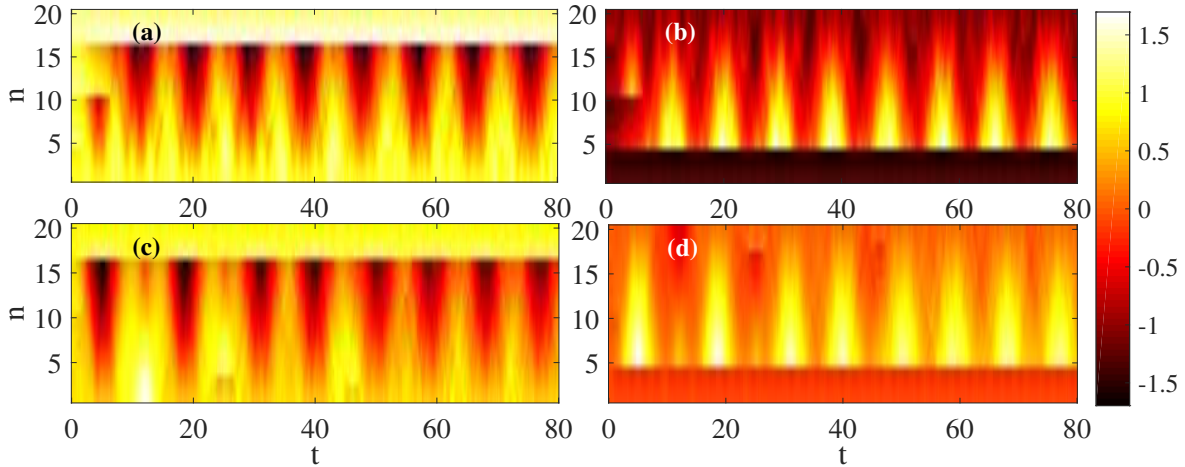}
		\caption{In this case, the 3 break waveform is
                  perturbed by $\pm 0.05\times\hat{\mathbf{e}}_2$,
                  for $d=1.07$. (a) Upper segment of the branch mode $u_{3u} + 0.05\times\hat{\mathbf{e}}_2$; (b)  upper segment of the branch mode 
                  $u_{3u}- 0.05\times\hat{\mathbf{e}}_2$; (c) lower segment of the branch mode 
                  $u_{3l}+ 0.05\times\hat{\mathbf{e}}_2$; (d) lower segment of the branch mode 
                  $u_{3l} - 0.05\times\hat{\mathbf{e}}_2$. In all cases, one of the lateral
                  breaks asymptotically
                persists.}
		\label{fig:dyn3s2}
	\end{center}
\end{figure}
The scenario of the evolution of a perturbed 3 break is shown
in  Fig.~\ref{fig:dyn3s}-\ref{fig:dyn3s2}. 
Here, as explained above, there are 2 unstable eigenvalues
present for all elements of these branches of solutions.
Therefore even more so than the 2 break case, the effects of $\hat{\mathbf{e}}_3$ are harder to see,
as any numerical noise contamination introducing $\hat{\mathbf{e}}_1$ and/or $\hat{\mathbf{e}}_2$ will
have stronger consequences. That is the reason why we chose to increase the strength of the
perturbation here compared to the 1 and 2 break cases.
The eigenvector $\hat{\mathbf{e}}_3$ is anti-symmetric like $u_3$. For the lower segment of the branch,
as noted before 
it will increase or decrease all breaks but more so the central one (as $\hat{\mathbf{e}}_3$ is larger
there). So if added, the central break grows at the expense of the side ones to lead to a  stable 
1 break (left bottom panel of Fig.~\ref{fig:dyn3s}). If subtracted it will collapse all three breaks 
to the elastic mode, yet
the extra energy will eventually allow the creation of a 1 break; right bottom panel of
Fig.~\ref{fig:dyn3s}. Note that the decay happens very soon
($t \approx 2.5$), so it is hardly discernible in panel (d).

In the case of the upper segment of the branch its third eigenmode has a central ``break" rather
smaller than the side ones;
see the left upper panel of Fig.~\ref{fig:u_ev3s}.
Thus, when added or subtracted to the stationary state, its influence is mainly on the side breaks,
leading them to oscillate, given the negative sign of $\lambda^2$.
This behavior, however, can only be seen for very short times. As previously mentioned,
contamination with any of the lower eigenmodes, especially
so the first which has the same parity, will lead to decay, governed mostly by those lower 
eigenmodes. This is evident on the dynamical simulation in
the upper panels of Fig.~\ref{fig:dyn3s}.

Turning now to the influence of the stronger eigenmodes, notice that $\hat{\mathbf{e}}_1$ is IP
 with the side breaks but OOP with the central break.
Then, in general, its effect will be to lead to the survival of the two side breaks by adding it, 
or the middle one by subtracting it. As we have
seen before the 2 break is also unstable, so one of those two will later collapse as well (see e.g. 
the top left panel of Fig.~\ref{fig:dyn3s1}). Notice the similarity between panel (b) and the upper
panels of Fig.~\ref{fig:dyn3s},
pointing to the influence of 
$\hat{\mathbf{e}}_1$ in that case. 

The effect of $\hat{\mathbf{e}}_2$, on the other end, being an even mode is nearly the same whether 
we add or subtract it to the mode. From its shape,
we can infer that it will collapse one of the side breaks, while increasing the other, and at an 
initial stage not influence much of the central one. But of course the 2 break thus formed is also
unstable and one (the central one in this case) will soon disappear as well towards
a 1 break state. This is confirmed in Fig.~\ref{fig:dyn3s2}.

\section{Conclusions \& Future Work}
\label{sec:conclusion}
In the present work, we have examined solutions involving different
numbers of fractures/breaks in a chain featuring a Lennard-Jones
potential of interaction between the nodes and Dirichlet boundary
conditions at the edges. We saw that for each of the solutions
beyond the uniform, elastic one, there was a (more) stable and
a (more) unstable portion of the branch, separated by a critical
point where the monotonicity of the strain and/or the energy
as a function of the precompression stress changed.
At the same time, while the single break solutions could be
potentially stable, any state with $N>1$ break would feature
$N-1$ real eigenvalue pairs, being associated with respective
instabilities. By monitoring  the eigendirections of these instabilities,
we could connect them with the tendency to eliminate one or more
breaks from the chain and result to fewer break, more robust waveforms.
These conclusions of the stability analysis were subsequently
corroborated by means of direct numerical simulations featuring the unstable
evolution of controlled numerical experiments where the instability-inducing
eigenvectors were added to the unstable structures.

Naturally, a number of additional directions for future research
are emerging as a result of the present study. On the one hand,
in the one-dimensional setting, it is especially relevant to
explore the role of interactions beyond those of nearest neighbors.
Inducing next nearest neighbor interactions in competition with
nearest neighbor ones may be a topic that will modify the
stability of the presently considered states and will be of
interest to explore in light of zigzag~\cite{efremidis} and
related configurations. On the other hand, it would be
of particular interest to explore how configurations like
the ones considered herein behave in higher dimensional
settings. The latter offer the possibility of different types
of geometries (e.g. in 2d square, hexagonal, honeycomb etc.)
and thus may induce an interplay of geometry with the nonlinear
interactions that may introduce novel states. Such studies are
currently in progress and will be reported in future publications.

\section{Aknowledgments}
A.S.R. acknowledges financial support from FCT through grant UID/FIS/04650/2013, and 
grant SFRH/BSAB/135213/2017.
PGK gratefully acknowledges
support from the US-AFOSR under FA9550-17-1-0114, as well as from
NSF-PHY-1602994.


\end{document}